\documentclass{article}

\usepackage[english]{babel}

\usepackage[letterpaper,top=2cm,bottom=2cm,left=3cm,right=3cm,marginparwidth=1.75cm]{geometry}

\usepackage{amsmath}
\usepackage{amsthm}
\theoremstyle{definition}
\newtheorem{definition}{Definition}
\newtheorem{lemma}[definition]{Lemma}
\newtheorem{observation}[definition]{Observation}
\newtheorem{remark}[definition]{Remark}
\newtheorem{theorem}[definition]{Theorem}
\usepackage{amssymb}
\usepackage{csquotes}
\usepackage{graphicx}
\usepackage{xcolor}
\usepackage[colorlinks=true, allcolors=blue]{hyperref}
\usepackage{authblk}

\newcommand{\height}{\mathrm{ht}}
\newcommand{\A}{\mathcal{A}}

\title{On smallest synchronizing terms over constant alphabets}
\author[1,2]{Luisa Herrmann\thanks{\protect\url{luisa.herrmann@tu-dresden.de}}}
\author[1]{Richard Mörbitz\thanks{\protect\url{richard.moerbitz@tu-dresden.de}}}

\affil[1]{TU Dresden}
\affil[2]{ScaDS.AI Dresden/Leipzig}

\begin{document}
\maketitle

\begin{abstract}
We show a subexponential lower bound on the reset threshold of synchronizing deterministic finite tree automata (DTA) over alphabets of just two symbols.
This significantly improves the previous one, which was quadratic in the number of states.
Our result also narrows the gap towards the lower bound for DTA over alphabets that grow linearly with the number of states, as well as the best known upper bound, both of which are currently exponential.
\end{abstract}

\section{Introduction}

In the field of deterministic finite (string) automata (DFA), the notion of synchronization is a prominent and actively investigated topic.
A string $w$ is synchronizing for a given DFA $\A$ if there exists a fixed state $q$ such that for every state $q'$ of $\A$, when $\A$ starts reading $w$ in $q'$, it ends in $q$~\cite{Liu1963,Cerny1964}.
Applications include classical problems of automation and robotics, such as orienting parts on a conveyor belt~\cite{Natarajan1986} and navigating a robot from an unknown position to a known position~\cite{BealPerrin2014} (also cf.~\cite{AdlerWeiss1970,AdlerGoodwynWeiss1977}), but also communications in the form of self-synchronizing codes~\cite{Liu1963,PerrinSchutzenberger1992}.

One can easily see that not every DFA has a synchronizing word.
Whether a synchronizing word exists for a given DFA can be determined in cubic time wrt.\@ its number of states~\cite{Eppstein1990}.
A DFA that has a synchronizing word is called synchronizing; thus, the class of synchronizing DFA is a proper subclass of the class of all DFA.
For synchronizing DFA, their shortest synchronizing word is of particular interest,
and its length is referred to as the reset threshold of the DFA.
While the problem of finding the shortest synchronizing word of a DFA is known to be NP-hard~\cite{Rystsov1983,Eppstein1990}, the achievable reset threshold (wrt.\@ the number of states of the automaton) is still ongoing research.
Černý~\cite{Cerny1964} constructed a family of DFA with $n$ states whose shortest synchronizing word has length $(n-1)^2$,
and it was conjectured that this lower bound is tight~\cite{Starke1966,CernyPirickaRosenauerova1971}.
This conjecture, widely known as the \emph{Černý conjecture}, turned out to be one of the longest-standing open problems in automata theory:
to this day, only a cubic upper bound is known~\cite{Starke1966,Pin1983,Frankl1982}, and the only advances were able to improve the bound on the level of constants~\cite{Szykula2018,Shitov2019}.
We note that these bounds on the reset threshold only hold if the DFA is total, i.e., for every state and symbol, there is exactly one successor state.
For partial automata, exponential lower bounds are known~\cite{deBondtDonZantema2019}.

Finite-state tree automata (DTA) are a natural generalization of DFA that introduce branching~\cite{GecsegSteinby1984}.
Rather than linear strings, where each symbol (except the last one) has exactly one successor, they recognize trees, where each symbol may have an arbitrary (but fixed for that symbol) number of successors.
Recently~\cite{BlazejJanousekPlachy2023,BlaJanPla26}, the synchronization problem was also investigated for DTA -- with the natural adaptation, i.e., instead of a synchronizing word, a tree (called \emph{synchronizing term}) is desired.
As counterpart of the word's length, the height of the tree is considered, i.e., the reset threshold of a DTA is the height of its smallest synchronizing term.
For trees, the cubic upper bound does not carry over from the string case.
Instead, the authors were only able to show a trivial upper bound of $2^n - n - 1$.
For lower bounds, it is clear that Černý's quadratic bound carries over (since every DFA can be considered as a DTA).
The authors were also able to show exponential lower bounds, but only if the alphabet grows with the number of states.
This is not in the spirit of Černý's original construction, which featured a constant alphabet (of size 2) independent of the number of states.
Thus, for families of DTAs over constant alphabets, a significant gap remains between the quadratic lower bound and the exponential upper bound on the height of the smallest synchronizing term.

Our contribution is to provide a new flexible construction for families of DTA over alphabets of just two symbols, which allows to improve the existing lower bounds on the reset threshold of DTA in multiple steps.
First, for every degree $d \ge 2$, we obtain a lower bound that grows in $\Omega(n^d)$;
already for $d=4$ this separates the reset threshold of DTA from that of DFA (which is upper bounded by $O(n^3)$).
Second, we obtain a subexponential lower bound whose growth is equivalent to Landau's function ($e^{(1+o(1))\sqrt{n \cdot \ln n}}$).

This paper is structured as follows.
We recall some basic notions from formal language theory and number theory in Section~\ref{sec:preliminaries}.
Section~\ref{sec:sync-fta} contains all definitions related to the synchronization of DTA.
In Section~\ref{sec:construction}, we present the construction used to show our improved bounds, which we will do in Section~\ref{sec:bounds}.

\section{Basic Notions}
\label{sec:preliminaries}

The set of natural numbers (including $0$) is denoted by $\mathbb N$ and we let $\mathbb N_+ = \mathbb N \setminus \{ 0 \}$.
For every $n \in \mathbb N$, we let $[n] = \{ 1, \dots, n \}$.
Thus, in particular, $[0] = \emptyset$.

The \emph{composition} of two functions $f_1\colon A \to B$ and $f_2 \colon B \to C$ is the function $(f_2 \circ f_1) \colon A \to C$ defined by $(f_2 \circ f_1)(a) = f_2(f_1(a))$ for every $a \in A$.
For every unary function $f\colon A \to A$ and $i \in \mathbb N$, we let $f^i\colon A \to A$ denote the $i$-fold composition of $f$.
Formally, $f^0$ is the identity and $f^{i+1}(a) = f(f^i(a))$ for every $i \in \mathbb N$ and $a \in A$.
Moreover, we let $f^{\le n}(a)=\bigcup_{i\in\{0,\dots,n\}} \{ f^i(a) \}$ for every $n \in \mathbb N$ and $f^*(a)=\bigcup_{i\in\mathbb N} \{ f^i(a) \}$.

In this work, we are a bit lenient with the notation of composing higher-arity functions, e.g., given a function $f_3\colon B^2 \to C$, we let $f_3(f_1, f_1) \colon A^2 \to C$ be the function defined for every $(a, a') \in A^2$ by $f_3(f_1, f_1)((a, a')) = f_3(f_1(a), f_1(a'))$.
Whenever we speak of applying such a composed function to a single argument, we mean that it is applied to a tuple consisting of $k$ copies of this argument where $k$ is the arity of the composed function.
Often we are not only interested in the functions themselves, but also in their (tree-like) syntactic shape of composition (commonly denoted as \emph{terms}).
We refer to this shape as a \emph{composite expression} in order to avoid confusion with the notions \enquote{tree} and \enquote{term} that occur in the context of synchronizing tree automata. When we speak about the height of such an expression, we mean the same concept as in the case of trees (see below). Intuitively, it describes how deeply nested the expression is.

\paragraph{Ranked alphabets} By an \emph{alphabet}, in this work often denoted by $\Sigma$, we mean a non-empty and finite set whose elements are called \emph{symbols}. Here, we assign to each symbol a \emph{rank} which will stand for the number of its successor nodes in a tree. Formally, we say that a \emph{ranked alphabet} is a tuple $(\Sigma,\mathrm{rk})$ consisting of an alphabet $\Sigma$ together with a rank function $\mathrm{rk}\colon\Sigma\to\mathbb{N}$. In the usual way, we abbreviate the ranked alphabet $(\Sigma,\mathrm{rk})$ by $\Sigma$ if $\mathrm{rk}$ is clear from the context and we let $\Sigma^{(n)}$ stand for $\mathrm{rk}^{-1}(n)$. Moreover, we often define $\mathrm{rk}$ implicitly by writing $\sigma^{(n)}$ instead of $\mathrm{rk}(\sigma)=n$ for a symbol $\sigma\in\Sigma$.

\paragraph{Strings, trees, and tree contexts} Given an alphabet $\Sigma$, each finite sequence $w=a_1\ldots a_n$ consisting of symbols $a_1,\ldots,a_n\in \Sigma$ is called a \emph{string} over $\Sigma$ (of length $n\in\mathbb{N}$). If we now let $\Sigma$ be a ranked alphabet, we can easily generalize from strings to trees by letting each involved symbol $a_i$ have not just one but $\mathrm{rk}(a_i)$ many successors.

If a symbol has rank $0$ and, thus, no successors in a tree, we call it a \emph{leaf} ($b_0$ in the picture below). We will often consider trees which do not terminate with leaves but have open ends (in the sense that one could plug in any other tree). In order to address these points, we will use a set $X=\{x_1,x_2,\ldots\}$ of \emph{variables}, which is disjoint from any other set in this work. Formally, the set $\mathrm{T}_\Sigma(X)$ of \emph{trees} over $\Sigma$ and $X$ is the smallest set $T$ such that $(i)$ $X\subseteq T$ and $(ii)$ for each $n\in\mathbb{N}$, $\sigma\in\Sigma^{(n)}$, and $\xi_1,\ldots,\xi_n\in T$ we have $\sigma(\xi_1,\ldots,\xi_n)\in T$. We call a tree $\xi\in \mathrm{T}_\Sigma(X)$ a \emph{context} if (i) no variable occurs twice and (ii) the left-to-right sequence of variables occurring in the tree is $x_1 x_2\ldots x_l$. A \emph{full context} is a context without leaves.

\begin{figure}[t]
    \centering
    \includegraphics[width=1\linewidth]{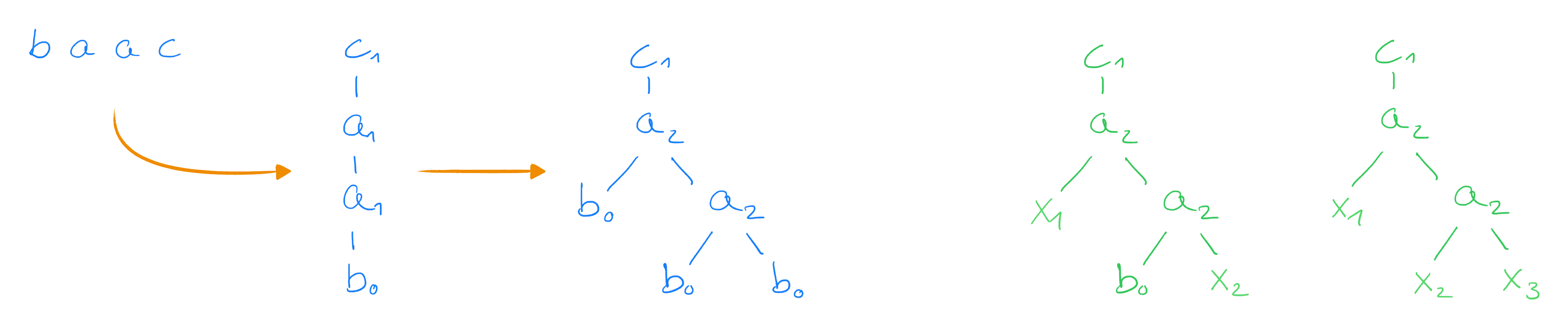}
    \caption{In the blue part, a string $w=baac$ over the alphabet $\Sigma_s=\{a,b,c\}$ is turned into a linear tree over the ranked alphabet $\Sigma_l=\{a_1^{(1)},b_0^{(0)},c_1^{(1)}\}$ where all symbols except the nullary leaf symbol are of rank 1. We can generalize this view to non-linear trees by allowing symbols of higher rank as e.g. with the ranked alphabet $\Sigma=\{a_2^{(2)},b_0^{(0)},c_1^{(1)}\}$. The left tree of the green part is a context, while the right tree is even a full context not having any leaf symbol.}
    \label{fig:trees}
\end{figure}

Given a tree $\xi\in\mathrm{T}_\Sigma(X)$ we define its \emph{height} as the length of the longest path (thinking of $\xi$ as a graph) from the root to a leaf symbol or variable, i.e., $\height(\xi)=0$ if $\xi\in\Sigma^{(0)}\cup X$ and $\height(\xi)=1+\mathrm{max}\{\height(\xi_1),\ldots,\height(\xi_n)\}$ if $\xi=\sigma(\xi_1,\ldots,\xi_n)$. The height of all trees in Figure \ref{fig:trees} is 3.

\paragraph{Number theory}
For two natural numbers $a$ and $b$ we say that $a$ \emph{divides} $b$, abbreviated by $a | b$, if there exists a natural number $c$ such that $a \cdot c = b$.
Let $k, \ell_1, \dots, \ell_k \in \mathbb N_+$.
The \emph{least common multiple of $\{ \ell_1, \dots, \ell_k \}$}, denoted by $\mathrm{lcm} \{ \ell_1, \dots, \ell_k \}$, is the smallest positive number $a$ such that $\ell_1 | a, \dots, \ell_k | a$.
Given $n \in \mathbb N$, Landau's function $g(n)$ is the largest least common multiple of all partitions of $n$, i.e.,
\[
    g(n) = \max \{ \mathrm{lcm} \{\ell_1, \dots, \ell_k\} \mid k \in \mathbb N_+, (\ell_1, \dots, \ell_k) \in (\mathbb N_+)^k, \ell_1 + \cdots + \ell_k \le n \} \text.
\]
While no precise closed form of $g(n)$ is known, it can be approximated as $g(n) = e^{(1+o(1)) \cdot \sqrt{n \cdot \ln n}}$~\cite{Lan1903}.
We note that the tuple $(\ell_1, \dots, \ell_k)$ that witnesses $g(n)$ is not necessarily unique.
However, for every $n \ge 5$ there exists one that has the properties required by our construction (cf.\@ Theorem~\ref{thm:landau}).

\begin{lemma}\label{lem:lower-bound-landau}
    For every $n \ge 5$ there exist $k \ge 2$ and $\ell_1, \dots, \ell_k \in \mathbb N_+$ such that $\ell_1 + \dots + \ell_k \le n$, $\mathrm{lcm} \{ \ell_1, \dots, \ell_k \} = g(n)$, and there is some $i \in [k]$ with $\ell_i \ge 2$.
\end{lemma}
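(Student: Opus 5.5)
Take any witness $(\ell_1,\dots,\ell_k)$ for $g(n)$, i.e.\ a tuple with $\ell_1+\dots+\ell_k\le n$ and $\mathrm{lcm}\{\ell_1,\dots,\ell_k\}=g(n)$; one exists by the definition of $g$. We modify it into a tuple with $k\ge 2$ and some $\ell_i\ge 2$, without increasing the sum and without changing the least common multiple.

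First we ensure some $\ell_i\ge 2$. Since $n\ge 5$, the partition $(2,3)$ has sum $5\le n$ and lcm $6$, so $g(n)\ge 6>1$. An lcm of entries all equal to $1$ is $1$, so some $\ell_i\ge 2$ in every witness.

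Second we ensure $k\ge 2$. If the witness already has $k\ge 2$, we are done. Otherwise $k=1$, the witness is $(\ell_1)$ with $\ell_1=g(n)$, and $\ell_1\le n$.

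\begin{itemize}
\item If $\ell_1<n$, the tuple $(\ell_1,1)$ has sum $\ell_1+1\le n$ and the same lcm $g(n)$, and it contains $\ell_1=g(n)\ge 6\ge 2$.
\item If $\ell_1=n$, then $g(n)=n$. We show this is impossible for $n\ge 5$ by exhibiting a partition with larger lcm, so this case does not occur.
\begin{itemize}
\item If $n$ is odd, use $(2,n-2)$. Here $n-2$ is odd, so $\mathrm{lcm}=2(n-2)$, and $2(n-2)>n$ because $n>4$.
\item If $n$ is even, use $(3,n-3)$. This has lcm at least $\frac{3(n-3)}{\gcd(3,n-3)}$.
\begin{itemize}
\item If $3\nmid n$, the lcm is $3(n-3)>n$, since $n>4.5$.
\item If $3\mid n$, then $6\mid n$ and $n\ge 6$. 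Use $(2,n-3)$ instead: $n-3$ is odd, so the lcm is $2(n-3)$, which exceeds $n$ when $n>6$. For $n=6$, use $(2,3)$, whose lcm $6$ equals $n$.
\end{itemize}
\end{itemize}
\end{itemize}

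The case $n=6$ needs no contradiction. We simply replace the witness $(6)$ by $(2,3)$: it has the same lcm $6=g(6)$, sum $5\le 6$, $k=2$, and contains $3\ge 2$.

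In every case we obtain the required tuple. The only delicate step is the case analysis excluding $g(n)=n$ realised solely by $(n)$. The small case $n=6$ is the one to watch, and it is handled by switching to $(2,3)$.
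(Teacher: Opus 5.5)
Your proof is correct, and its core idea matches the paper's. A witness with a single entry $\ell_1 \le n$ has lcm at most $n$, so it suffices to exhibit two coprime parts whose lcm exceeds $n$. In the boundary case where no strict excess is available, you fall back on an explicit pair.

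The difference is in how the pairs are chosen:
\begin{itemize}
\item The paper uses one uniform choice, $(m, m+1)$ with $m = \lfloor (n-1)/2 \rfloor$. Consecutive integers are coprime, and $m(m+1) \ge n$ for all $n \ge 5$. The proof then splits into $g(n) = m(m+1)$, where the pair itself is the witness, and $g(n) > m(m+1) \ge n$, where every witness must have $k \ge 2$. This split absorbs the exceptional value $n = 6$ with no special treatment.
\item Your route starts from an arbitrary witness and repairs it. This costs a case split by parity and divisibility by $3$, plus a separate check of $n=6$. It is equally valid, just longer.
\end{itemize}

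Three small points could tighten your write-up:
\begin{itemize}
\item The subcase $k=1$, $\ell_1 < n$ is vacuous. The tuple $(n)$ already gives $g(n) \ge n$, so a singleton witness must be exactly $(n)$.
\item The subcase using $(3, n-3)$ is unnecessary. For every even $n \ge 8$, $n-3$ is odd, so $(2, n-3)$ has lcm $2(n-3) > n$ and sum $n-1 \le n$.
\item You announce that $g(n) = n$ is impossible for all $n \ge 5$, but then correctly treat $n=6$ as an exception. The announcement should exclude $n=6$ to match.
\end{itemize}
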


\begin{proof}
    Let $n \ge 5$ and $m = \lfloor\frac{n-1}{2}\rfloor$.
    We let $\ell_1 = m$ and $\ell_2 = m + 1$; then, clearly $\ell_1 + \ell_2 \le n$.
    We first show that $\mathrm{lcm} \{ \ell_1, \ell_2 \} = m \cdot (m + 1)$,
    where the part $\mathrm{lcm} \{ \ell_1, \ell_2 \} \le m \cdot (m+1)$ is obvious.
    For the converse, let $a = \mathrm{lcm} \{ \ell_1, \ell_2 \}$ and let $c$ be such that $a = m \cdot c$.
    From $(m+1) | (m+1) \cdot c$ and $(m+1) | a$ we obtain $(m+1) | ((m+1) \cdot c - a) = c$,
    hence $m \cdot (m+1) | a$ and thus $a \ge m \cdot (m+1)$.

    The next step is to show that $m \cdot (m + 1) \ge n$.
    If $n = 5$, then $2 \cdot 3 = 6 \ge 5$.
    If $n \ge 6$, we note that $m \ge \frac{n - 2}2$ and $m + 1 \ge \frac{n}2$,
    hence $m \cdot (m + 1) \ge \frac{n - 2}2 \cdot \frac {n}2 = \frac{n \cdot (n - 2)}4 \ge n$.
    
    Finally, if $g(n) = m \cdot (m + 1)$, the proof is concluded with $k \ge 2$ and $\ell_2 = m + 1 \ge 2$.
    Otherwise, there exist $k' \in \mathbb N_+$ and $\ell_1', \dots, \ell_{k'}' \in \mathbb N_+$ such that $\ell_1' + \dots + \ell_{k'}' \le n$ and $\mathrm{lcm} \{ \ell_1', \dots, \ell_{k'}' \} = g(n)$.
    But then $\mathrm{lcm} \{ \ell_1', \dots, \ell_{k'}' \} > n$ and hence $k' \ge 2$ and there must be $i \in [k']$ with $\ell_i' \ge 2$.
\end{proof}

A \emph{prime number} is a natural number $a \ge 2$ such that no $1 < b < a$ divides $a$ (we note that $1 | a$ and $a | a$ hold for every $a \ge 2$).
For each $i \in \mathbb N$, we let $p_i$ be the $i$-th prime number,
i.e., $p_1 = 2$ and, for each $i \in \mathbb N$, $p_{i+1}$ is the smallest prime number that is greater than $p_i$.
For each real number $x$, the number of primes up to $x$ is denoted by $\pi(x)$, i.e., $\pi(x) = |\{ i \mid p_i \le x\}|$.
Ramanujan~\cite{Ramanujan1919} proved that, for each $i \in \mathbb{N}_+$, there is a natural
number $N$ such that $\pi(x) - \pi(\frac{x}{2}) \ge i$ for every $x \ge N$; by
well-ordering, there is a smallest such number, denoted $R_i$ and called the $i$-th
\emph{Ramanujan prime}~\cite{Sondow2009}. The first five Ramanujan primes are
$R_1, \dots, R_5 = 2, 11, 17, 29, 41$~\cite{Ramanujan1919}.

\section{Tree Automata and Synchronization}
\label{sec:sync-fta}

Just as we view trees as generalisations of strings, we can understand tree automata as generalisations of string automata. Recall that a \emph{deterministic finite-state (string) automaton} (DFA) is a tuple $\mathcal{A}=(Q,\Sigma,q_0,(\delta_\sigma)_{\sigma\in\Sigma},F)$ where $Q$ is a finite set of states, $\Sigma$ an alphabet, $q_0\in Q$ and $F\subseteq Q$ (initial state and set of final states), and $\delta_\sigma\colon Q\to Q$ is a transition function for each symbol $\sigma$. While DFA can read a string $w=a_1\ldots a_n$ from left to right by assigning (starting from $q_0$) the appropriate state to each symbol and resulting in a \emph{run} $(q_0,a_1,q_1)\ldots(q_{n-1},a_n,q_n)$, tree automata need to process the branching structure of a tree. Starting from the leaf symbols, again each symbol gets assigned a state. However, as a node may have several children, we need to incorporate several child states in our transition function and, thus, allow for functions of higher arity (cf.\@ Figure~\ref{fig:strings-to-trees}).

Formally, a \emph{bottom-up deterministic tree automaton} (DTA) is a tuple $\mathcal{A}=(Q,\Sigma,(\delta_\sigma)_{\sigma\in\Sigma},F)$ where now $\Sigma$ is a ranked alphabet and $\delta_\sigma\colon Q^{\mathrm{rk}(\sigma)}\to Q$ is the transition function for each symbol $\sigma\in\Sigma$ with its arity determined by the rank of $\sigma$. Note that we do not need an initial state in this setting, as $\delta_\alpha$ is a constant for each (nullary) leaf symbol $\alpha$. 
Moreover, since the question of tree acceptance and the language of $\mathcal{A}$ is not relevant for synchronization, we do not need the set of final states in this paper and simply choose $F = Q$.

\begin{figure}[t]
    \centering
    \includegraphics[width=0.9\linewidth]{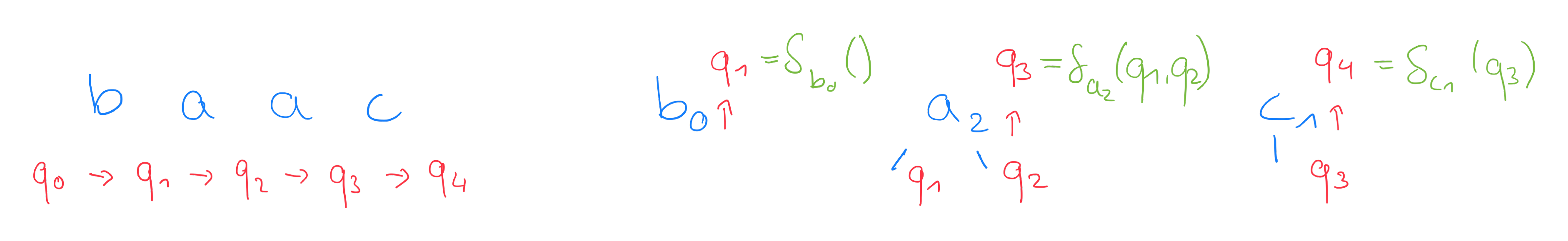}
    \caption{While in string automata transitions are unary functions allowing to read the string in a linear fashion from left to right (as in the left part), the transition functions of tree automata incorporate the arity of the related symbol (right part).}
    \label{fig:strings-to-trees}
\end{figure}

Investigating the synchronization of a DTA $\A$ benefits from additional tools to analyze the state behavior of $\A$ whilst reading a tree.
For this, we extend the transition functions to nonempty sets of input states.
Formally, let $\delta_\sigma(Q_1,\ldots,Q_{\mathrm{rk}(\sigma)})=\{\delta_\sigma(q_1,\ldots,q_{\mathrm{rk}(\sigma)})\mid q_1\in Q_1,\ldots,q_{\mathrm{rk}(\sigma)}\in Q_{\mathrm{rk}(\sigma)}\}$; that way we determine which set of states is reachable if we can input arbitrary states from $Q_1,\ldots,Q_{\mathrm{rk}(\sigma)}$, respectively, to $\delta_\sigma$. Furthermore, in order to determine the set of states $\mathcal{A}$ can reach after reading a tree $\xi\in\mathrm{T}_\Sigma(X)$, we define the function $\mathrm{Run_\A}\colon \mathrm{T}_\Sigma(X)\to 2^{Q}$ with 
\[\mathrm{Run}_\A(\xi)=
\begin{cases}
    Q & \text{if } \xi\in X\\
    \delta_\sigma(\mathrm{Run}_\A(\xi_1),\ldots,\mathrm{Run}_\A(\xi_n)) & \text{if } \xi=\sigma(\xi_1,\ldots,\xi_n)
\end{cases}\]
where we allow at each position in $\xi$, which carries a variable, the full set $Q$ of states as input. Note that, if $\xi$ does not contain a variable, then $\mathrm{Run}_\A(\xi)$ is a singleton as $\mathcal{A}$ is deterministic.

Before we introduce the notion of tree automata synchronization considered in this paper, let us briefly recap what synchronization means in the context of string automata. We say that a DFA $\mathcal{A}=(Q,\Sigma,q_0,(\delta_\sigma)_{\sigma\in\Sigma},F)$ is \emph{synchronizing} if there exists a string $w\in\Sigma^*$ which converts $\mathcal{A}$ into one fixed state $q$ no matter with which state $\mathcal{A}$ starts to process $w$. Clearly, not every string automaton is synchronizing but, given a DFA $\mathcal{A}$, the existence of such a synchronizing string for $\mathcal{A}$ can be tested in polynomial time \cite{Starke1966,Eppstein1990}. In the setting of tree automata, the definition of synchronizing trees is less clear than in the case of strings. As explained in detail in \cite{BlaJanPla26}, it makes a difference whether one considers as synchronizing tree (i) a tree without variables, (ii) a context with at least one variable, or (iii) a full context.
Since case (i) was shown to be trivial and case (ii) was reduced to the string case,
we will focus on case (iii) of full contexts, which were called \emph{strongly synchronizing} in~\cite{BlaJanPla26}.

\begin{definition}
    Let $\mathcal{A}=(Q,\Sigma,(\delta_\sigma)_{\sigma\in\Sigma},F)$ be a DTA and $\xi\in\mathrm{T}_\Sigma(X)$ a full context. We say that $\xi$ is a \emph{synchronizing term (for $\mathcal{A}$)} if $|\mathrm{Run}_\A(\xi)|=1$.
\end{definition}

A tree automaton $\mathcal{A}$ is called \emph{synchronizing} if it has a synchronizing term.
A synchronizing term of smallest height is called the \emph{smallest synchronizing term} of $\A$ and its height is called the \emph{reset threshold} of $\A$, denoted by $\mathrm{rt}(\A)$.
In this paper, we are interested in the question of how large the reset threshold of a tree automaton can be depending on its number of states. Whilst in \cite{BlaJanPla26} a super-polynomial lower bound could only be shown in the case of an alphabet growing at least linearly, the construction we present in the following sections requires only two alphabet symbols: the binary symbol $f$ and the unary symbol $w$. For this, we will construct in Section \ref{sec:construction} two transition functions $\delta_f$ and $\delta_w$ that force the associated tree automaton $\A$ (defined in Section \ref{sec:bounds}) to read very high trees before it synchronizes.

\begin{remark}
    The way in which we have defined tree automata (the transitions are given by a family of operations on the set of states of varying arity, i.e., functions of the form $Q^{k_i}\to Q$) allows for an alternative view of tree automata in the context of synchronization: Instead of asking how large a synchronizing term must be, we can also ask how deeply we need to compose functions from $(\delta_\sigma)_{\sigma\in\Sigma}$ such that the resulting expression is a constant function which maps all inputs to a singleton. Therefore, in Section \ref{sec:construction} we will abstract from the tree automaton perspective and only investigate two particular functions $\delta_w\colon [n] \to [n]$ and $\delta_f\colon [n]^2 \to [n]$. In Section \ref{sec:bounds}, we will then apply the insights gained to the synchronization problem of tree automata.
\end{remark}

\section{Construction of $\delta_f$ and $\delta_w$} \label{sec:construction}

In this section we design two functions $\delta_f\colon [n]^2 \to [n]$ and $\delta_w\colon [n] \to [n]$ with the goal that the smallest composite expression over $\{\delta_f, \delta_w\}$ that maps $[n]$ to $\{1\}$ is as deeply nested as possible.
We will approach this by parameterizing $\delta_w$ and $\delta_f$ by positive numbers $\ell_1, \dots, \ell_k$ with $\ell_1 + \cdots + \ell_k \le n$, which will be fixed in Section \ref{sec:bounds}. Then we will show that, starting from $[n]$, the earliest singleton is reachable after exactly $\mathrm{lcm} \{\ell_1, \dots, \ell_k\} + 1$ nested applications of $\delta_w$ and $\delta_f$. The idea behind the $\ell_i$'s is that we want to partition $[n]$ into $k$ intervals, within which the application of $\delta_w$ results in a cyclic shift. This way, we can apply $\delta_w$ up to $\mathrm{lcm}\{\ell_1, \dots, \ell_k\}-1$ times consecutively to some initial set $A_0$ without running into a cycle. The job of mapping the full set $[n]$ to the appropriate $A_0$ will be done by the function $\delta_f$. Finally, $\delta_f$ will project $\delta_w^{\mathrm{lcm}\{\ell_1, \dots, \ell_k\}-1}(A_0)=F$ to $\{1\}$. An illustration of the resulting expression can be found in Figure \ref{fig:expression}.

\begin{figure}[t]
    \centering
    \includegraphics[width=0.35\linewidth]{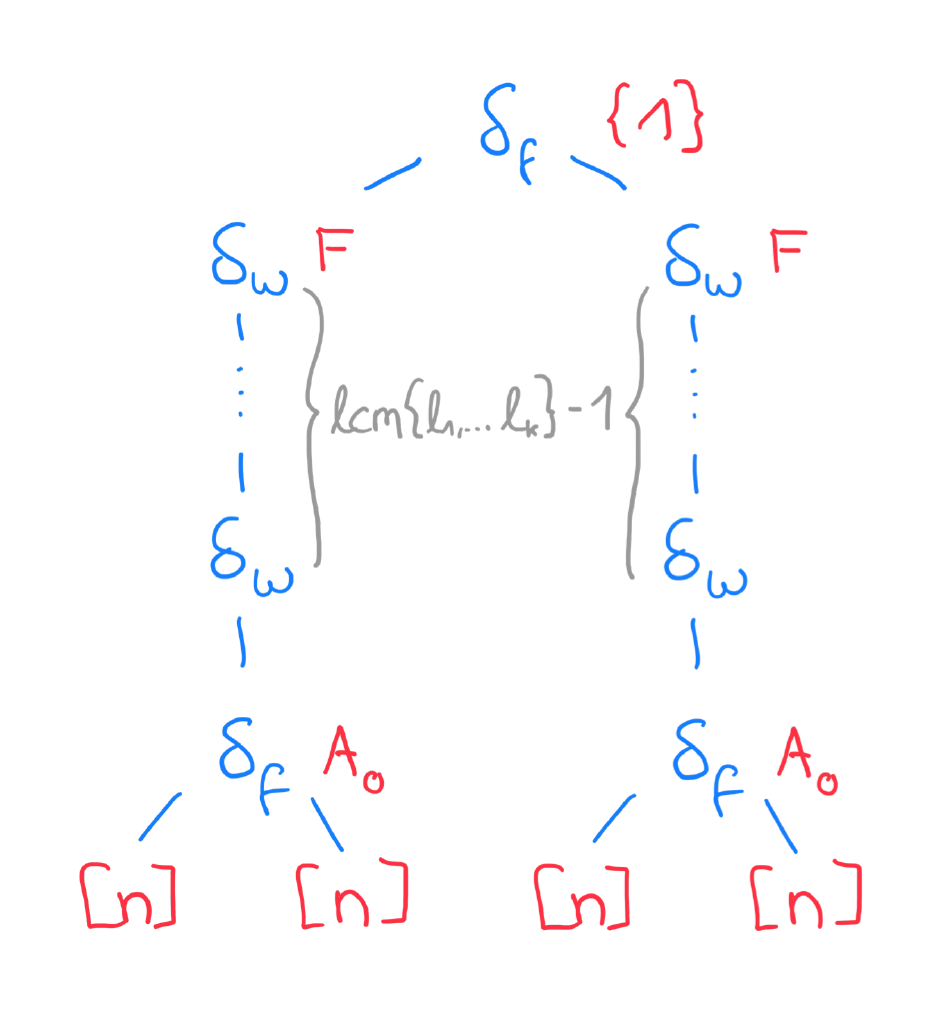}
    \caption{The structure of the smallest composite expression over $\{\delta_f,\delta_w\}$ that maps $[n]$ to $\{1\}$ together with the sets it reaches after individual application steps.}
    \label{fig:expression}
\end{figure}

Formally, let $n, k, \ell_1, \dots, \ell_k \in \mathbb N_+$ such that $\ell_1 + \cdots + \ell_k \le n$. We let $c_1 = 1$ and, for every $i \in [k]$, we let $c_{i+1} = c_i + \ell_i$. Moreover, for each $1 \leq i \leq k$ we let $L_i=\{c_i,\ldots,c_{i+1}-1\}$ be the $i$-th block (of length $\ell_i$) and we let $L_{\mathrm{const}}=\{c_{k+1},\ldots,n\}$ be the residual interval (in case $\ell_1 + \cdots + \ell_k < n$).

Moreover, we let
\begin{align*}
    A_0 &= \{c_1, c_2, \dots, c_{k}\} \cup \{ c_{k+1}, c_{k+1} + 1, \dots, n \} \\
    F &= \{c_2 - 1, c_3 - 1, \dots, c_{k+1} - 1\} \cup \{ c_{k+1}, {c_{k+1} + 1}, \dots, n \}
\end{align*}
and we define the mappings $\delta_w\colon [n] \to [n]$ and $\delta_f\colon [n]^2 \to [n]$ as follows:
\begin{align*}
\delta_w(q) &=
\begin{cases}
    c_{i} & \text{if } q=c_{i+1}-1 \text{ for some } i\in[k] \text{ ($q$ is the last element of a block)}\\
    q+1 & \text{if } q\in L_i\setminus\{c_{i+1}-1\} \text{ for some } i\in[k]\\
    q & \text{if } q\in L_{\mathrm{const}}\\
\end{cases}\\
\intertext{for each $q\in[n]$, and}
\delta_f(q_1,q_2)&=
\begin{cases}
    1 & \text{if } q_1,q_2 \in F\\
    c_i & \text{if } q_1\in L_i \text{ for some } i\in[k], (q_1,q_2)\notin F^2\\
    q_1 & \text{else}
\end{cases}
\end{align*}
for each $q_1,q_2\in[n]$. An example for $\delta_w$ and $\delta_f$ as well as their functionality is given in Figure \ref{fig:functions}.

\begin{figure}[t]
    \centering
    \includegraphics[width=0.5\linewidth]{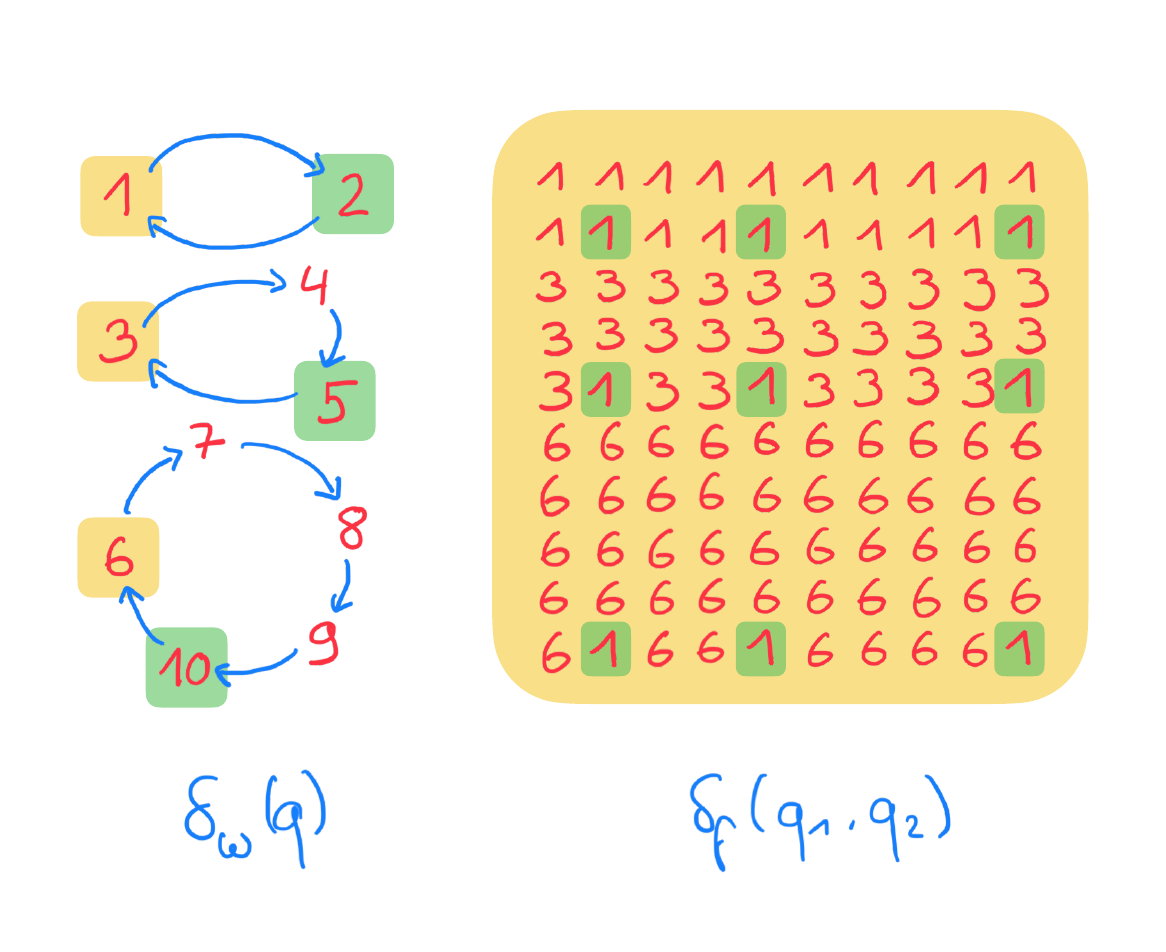}
    \caption{The shape of $\delta_w$ and $\delta_f$ for $n=10$ and $\ell_1=2$, $\ell_2=3$, $\ell_3=5$. Starting from $\delta_f([n],[n])=\{1,3,6\}$ (highlighted in yellow), it takes $\mathrm{lcm}\{\ell_1, \dots, \ell_3\}-1=2\cdot 3\cdot 5-1$ steps for $\delta_w$ to reach $\{2,5,10\}$, the last set before running into a cycle (highlighted in green). This very set is mapped by $\delta_f$ to $\{1\}$.}
    \label{fig:functions}
\end{figure}

In the following, we will show that our intuition regarding the functionality of $\delta_w$ and $\delta_f$ is correct, i.e., the earliest singleton is reachable after exactly $\mathrm{lcm} \{\ell_1, \dots, \ell_k\} + 1$ nested applications (cf.\@ Lemma~\ref{lem:smallest-h}).
As a preparation for this, we will investigate how these functions behave when we apply them repeatedly, starting from the set $[n]$.

From now on, we let $\mathsf{LCM}$ denote the least common multiple of $\{\ell_1, \dots, \ell_k\}$.
Moreover, for the correctness of the statements in this section, we prohibit the corner cases that $\ell_1 = \dots = \ell_k = 1$ (which would entail $A_0 = F = [n]$, breaking Lemma~\ref{lem:reach-induction}) and $|A_0| = 1$ (which would break Lemma~\ref{lem:smallest-h}; note that this happens if and only if both $k = 1$ and $L_{\mathrm{const}} = \emptyset$).

\subsubsection*{Properties of $\delta_w$}

First let us note that $\delta_w$ applied to the entire set $[n]$ is simply the identity, which follows directly from its definition.

\begin{observation}\label{obs:image-delta-w}
    $\delta_w([n])=[n]$.
\end{observation}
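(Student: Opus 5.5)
The plan is to show that $\delta_w$ restricted to $[n]$ is a bijection; then its image is all of $[n]$. Since $[n]$ is finite, it suffices to show that $\delta_w$ is surjective, or equivalently injective. The sets $L_1,\dots,L_k$ and $L_{\mathrm{const}}$ partition $[n]$, because $c_1=1$, $c_{i+1}=c_i+\ell_i$ and $L_{\mathrm{const}}=\{c_{k+1},\dots,n\}$. So I will check that $\delta_w$ maps each part bijectively onto itself.

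On $L_{\mathrm{const}}$, $\delta_w$ is the identity by the third case of its definition, so $\delta_w(L_{\mathrm{const}})=L_{\mathrm{const}}$.

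For a block $L_i=\{c_i,\dots,c_{i+1}-1\}$ with $i\in[k]$, the first two cases do all the work.
\begin{itemize}
\item Each $q\in\{c_i,\dots,c_{i+1}-2\}$ is sent to $q+1$. These images are exactly $\{c_i+1,\dots,c_{i+1}-1\}$.
\item The last element $c_{i+1}-1$ is sent to $c_i$.
\end{itemize}
Hence $\delta_w(L_i)=L_i$, and $\delta_w$ acts on $L_i$ as a cyclic shift. The case $\ell_i=1$ is consistent with this: $L_i=\{c_i\}$, only the first case applies, and $c_i$ is fixed.

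The cases in the definition of $\delta_w$ are mutually exclusive and exhaustive, so $\delta_w$ is well defined. Taking the union of the images over the parts gives $\delta_w([n])=\bigcup_i L_i\cup L_{\mathrm{const}}=[n]$.

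The only point that needs care is this well-definedness and partition bookkeeping: a given $q$ must not fall under two cases. This is immediate from the blocks being disjoint consecutive intervals.
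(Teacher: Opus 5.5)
Your proof is correct and takes the same approach as the paper, which only remarks that the claim follows directly from the definition of $\delta_w$. Your block-by-block check is exactly the unpacking of that remark: a cyclic shift on each $L_i$ (including the fixed point when $\ell_i=1$), the identity on $L_{\mathrm{const}}$, and these parts partitioning $[n]$.
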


Since every block length $\ell_i$ divides $\mathsf{LCM}$, we obtain that applying $\delta_w$ $\mathsf{LCM}$-times subsequently to $A_0$ results in a cycle.
Then it is easy to see that applying it one time less yields $F$.

\begin{observation}\label{obs:delta-w-F}
    $\delta_w^{\mathsf{LCM}}(A_0)=A_0$ and $\delta_w^{\mathsf{LCM}-1}(A_0) = F$.
\end{observation}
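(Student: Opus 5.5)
The plan is to compute $\delta_w^j$ explicitly on each block and then evaluate it on the elements of $A_0$. Restricted to a block $L_i$, $\delta_w$ is the cyclic shift $q \mapsto c_i + ((q - c_i + 1) \bmod \ell_i)$. Restricted to $L_{\mathrm{const}}$, it is the identity. This gives the closed form, for every $j \in \mathbb N$, every $i \in [k]$ and every $q \in L_i$,
\[
\delta_w^j(q) = c_i + \bigl((q - c_i + j) \bmod \ell_i\bigr),
\]
while $\delta_w^j(q) = q$ for $q \in L_{\mathrm{const}}$. The formula is verified by a routine induction on $j$. The induction step splits into two cases. If $(q - c_i + j) \bmod \ell_i < \ell_i - 1$, the current state is not the last element of its block, so $\delta_w$ adds one. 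If it equals $\ell_i - 1$, the current state is $c_{i+1}-1$, which $\delta_w$ sends back to $c_i$, and this matches the residue wrapping around to $0$. Here we use that $L_1, \dots, L_k, L_{\mathrm{const}}$ partition $[n]$ and that $\delta_w$ maps each block into itself.

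Since $\delta_w$ is applied elementwise to sets, $\delta_w^j(A_0) = \{\delta_w^j(c_1), \dots, \delta_w^j(c_k)\} \cup L_{\mathrm{const}}$. This holds because $A_0$ consists of exactly one element $c_i$ per block together with all of $L_{\mathrm{const}}$. For $q = c_i$ the formula gives $\delta_w^j(c_i) = c_i + (j \bmod \ell_i)$.

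Now take $j = \mathsf{LCM}$. Each $\ell_i$ divides $\mathsf{LCM}$, so $j \bmod \ell_i = 0$ and $\delta_w^{\mathsf{LCM}}(c_i) = c_i$; hence $\delta_w^{\mathsf{LCM}}(A_0) = A_0$. Next take $j = \mathsf{LCM} - 1$, which lies in $\mathbb N$ because $\mathsf{LCM} \ge 1$. Then $j \equiv -1 \equiv \ell_i - 1 \pmod{\ell_i}$, so $\delta_w^{\mathsf{LCM}-1}(c_i) = c_i + \ell_i - 1 = c_{i+1} - 1$. Therefore $\delta_w^{\mathsf{LCM}-1}(A_0) = \{c_2 - 1, \dots, c_{k+1} - 1\} \cup L_{\mathrm{const}} = F$.

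There is no real obstacle. The only point needing care is the boundary case $\ell_i = 1$. There $c_i = c_{i+1} - 1$ is simultaneously the first and last element of its block, so the first case of $\delta_w$ applies and $\delta_w(c_i) = c_i$. This agrees with the formula, since everything is taken mod $1$.
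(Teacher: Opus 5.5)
Your proof is correct and takes the same approach as the paper, which only sketches the argument. The sketch is that $\delta_w$ is a cyclic shift of period $\ell_i$ on each block $L_i$ and the identity on $L_{\mathrm{const}}$, so $\ell_i \mid \mathsf{LCM}$ sends each $c_i$ back to itself after $\mathsf{LCM}$ steps, and one step fewer lands on $c_{i+1}-1$. Your closed form for $\delta_w^j$ and your check of the case $\ell_i = 1$ make the paper's ``it is easy to see'' rigorous.
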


Moreover, as the application of $\delta_w$ just shifts the elements within the $k$ intervals we consider cyclically, each $\delta_w^i(A_0)$ contains exactly one element from every interval.

\begin{lemma}\label{lem:delta-w-chain}
    For every $i \in \mathbb N$ and $j \in [k]$ it holds that $|\delta_w^i(A_0) \cap L_j| = 1$.
    Moreover, $L_{\mathrm{const}} \subseteq \delta_w^i(A_0)$.
\end{lemma}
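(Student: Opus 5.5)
We argue by induction on $i$, using two structural facts about $\delta_w$ that come straight from its definition. First, $\delta_w$ maps each block $L_j$ into itself, and its restriction to $L_j$ is a bijection: it is the cyclic shift $c_j \mapsto c_j+1 \mapsto \dots \mapsto c_{j+1}-1 \mapsto c_j$. Second, $\delta_w$ is the identity on $L_{\mathrm{const}}$. Together with the fact that $[n]$ is the disjoint union of $L_1,\dots,L_k$ and $L_{\mathrm{const}}$, these give the decomposition
\[
  \delta_w(S) = \bigcup_{j\in[k]} \delta_w(S\cap L_j) \;\cup\; (S\cap L_{\mathrm{const}})
\]
for every $S\subseteq[n]$, where $\delta_w(S\cap L_j)\subseteq L_j$ for each $j\in[k]$.

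For the base case $i=0$, we read both claims off the definition of $A_0$. For every $j\in[k]$ we have $A_0\cap L_j=\{c_j\}$: the point $c_j$ lies in $L_j$ because $\ell_j\ge 1$, and none of the other points $c_{j'}$ with $j'\neq j$, nor any element of $L_{\mathrm{const}}$, belongs to $L_j$. Moreover, $L_{\mathrm{const}}=\{c_{k+1},\dots,n\}\subseteq A_0$ holds by construction.

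For the induction step, suppose $|\delta_w^i(A_0)\cap L_j|=1$ for all $j\in[k]$ and $L_{\mathrm{const}}\subseteq \delta_w^i(A_0)$. Applying the decomposition to $S=\delta_w^i(A_0)$, the part of $\delta_w^{i+1}(A_0)$ lying in $L_j$ is exactly $\delta_w(\delta_w^i(A_0)\cap L_j)$. This is the image of a singleton, hence again a singleton, and it lies in $L_j$ because $\delta_w$ preserves blocks. No element from another block or from $L_{\mathrm{const}}$ can land in $L_j$, since those parts are mapped into themselves. For the second claim, $\delta_w$ fixes $L_{\mathrm{const}}$ pointwise, so $L_{\mathrm{const}}=\delta_w(L_{\mathrm{const}})\subseteq\delta_w^{i+1}(A_0)$.

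There is no real obstacle in this argument. The only step that needs care is checking that $\delta_w$ really maps each $L_j$ into itself, including the wrap-around case $c_{j+1}-1\mapsto c_j$ and the degenerate case $\ell_j=1$, where $c_j=c_{j+1}-1$ is mapped to itself. Both cases follow directly from the first two clauses of the definition of $\delta_w$.
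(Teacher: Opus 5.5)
Your proof is correct and takes the same approach as the paper: induction on $i$, with the base case read off from the definition of $A_0$, and the step using that $\delta_w$ maps each block $L_j$ into itself and fixes $L_{\mathrm{const}}$ pointwise. Your decomposition of $\delta_w(S)$ also spells out a point the paper leaves implicit, namely that no element from another block or from $L_{\mathrm{const}}$ can land in $L_j$, which guarantees the intersection remains exactly a singleton.
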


\begin{proof}
    The proof is done by induction on $i$ where the induction base ($i=0$) follows directly from the definition of $A_0$.
    For the induction step, let $j \in [k]$.
    By the induction hypothesis, there is exactly one $a \in L_j$ such that $a \in \delta_w^i(A_0)$.
    Then, by definition, we have $\delta_w(a) \in L_j$.
    Moreover, by induction hypothesis, for every $a \in L_{\mathrm{const}}$ we have $a \in \delta_w^i(A_0)$.
    Then, by definition, $\delta_w(a) = a$.
\end{proof}

Finally we can show that, with every consecutive application of $\delta_w$ below $\mathsf{LCM}$, we obtain a previously unseen set.

\begin{lemma}\label{lem:delta-i-neq-delta-j}
    For every $i, j \in \{0, \dots, \mathsf{LCM}-1\}$ it holds that $i \not= j$ implies $\delta_w^i(A_0) \not= \delta_w^j(A_0)$.
\end{lemma}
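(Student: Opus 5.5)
We will make explicit which element of each block lies in $\delta_w^i(A_0)$, and then use a Chinese-remainder-type argument on the block lengths. The first step is a closed form: for every $i \in \mathbb N$ and $j \in [k]$,
\[
  \delta_w^i(A_0) \cap L_j = \{\, c_j + (i \bmod \ell_j) \,\}.
\]
We prove this by induction on $i$. The base case $i=0$ is the definition of $A_0$. For the step, suppose the element is $c_j + r$ with $r = i \bmod \ell_j$. If $r < \ell_j - 1$, then $\delta_w$ maps it to $c_j + r + 1$. If $r = \ell_j - 1$, then the element is $c_{j+1}-1$, which $\delta_w$ sends back to $c_j$. In both cases the new element is $c_j + ((i+1) \bmod \ell_j)$. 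By Lemma~\ref{lem:delta-w-chain}, each block contains exactly one element of $\delta_w^i(A_0)$, and $L_{\mathrm{const}}$ is always fully contained. Hence $\delta_w^i(A_0)$ is completely determined by the residue vector $(i \bmod \ell_1, \dots, i \bmod \ell_k)$.

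Now take $i, j \in \{0,\dots,\mathsf{LCM}-1\}$ with $\delta_w^i(A_0) = \delta_w^j(A_0)$. The sets have the same unique element in each block $L_m$, so $i \equiv j \pmod{\ell_m}$ for every $m \in [k]$. Therefore every $\ell_m$ divides $|i-j|$, and so $\mathsf{LCM}$ divides $|i-j|$. Since $0 \le |i-j| < \mathsf{LCM}$, it follows that $i = j$, which is the contrapositive of the claim.

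The only step that needs real care is the closed form, and specifically the wrap-around case, where one must check that the last element of $L_j$ is $c_j + \ell_j - 1 = c_{j+1}-1$. The rest is elementary number theory: every common multiple of $\ell_1,\dots,\ell_k$ is a multiple of $\mathsf{LCM}$. If the paper has not stated this yet, we verify it by writing $|i-j| = q\cdot\mathsf{LCM} + r$ with $0 \le r < \mathsf{LCM}$. Each $\ell_m$ divides both $|i-j|$ and $\mathsf{LCM}$, hence divides $r$. If $r$ were positive, it would be a common multiple smaller than $\mathsf{LCM}$, contradicting minimality, so $r = 0$.
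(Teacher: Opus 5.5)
Your proof is correct and follows essentially the same route as the paper. Both arguments use that the unique element of $\delta_w^i(A_0)$ in block $L_h$ (Lemma~\ref{lem:delta-w-chain}) depends only on $i \bmod \ell_h$, so equal sets force $\ell_h \mid (i-j)$ for all $h$, hence $\mathsf{LCM} \mid (i-j)$ and $i=j$. Your closed form $c_h + (i \bmod \ell_h)$ simply spells out by induction what the paper attributes to ``the definition of $\delta_w$''.
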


\begin{proof}
    Let $i, j \in \{0, \dots, \mathsf{LCM}-1\}$ such that $i > j$ (w.l.o.g.) and assume that $\delta_w^i(c_h) = \delta_w^j(c_h)$ for every $h \in [k]$.
    By the definition of $\delta_w$, this implies $\ell_h | (i - j)$ for every $h \in [k]$, and as a consequence $\mathsf{LCM} | (i - j)$.
    But since $i, j \in \{0, \dots, \mathsf{LCM}-1\}$, this implies $i = j$.
    Hence we can assume that there exists $h \in [k]$ such that $\delta_w^i(c_h) \not= \delta_w^j(c_h)$.
    By Lemma~\ref{lem:delta-w-chain}, $\delta_w^i(c_h)$ and $\delta_w^j(c_h)$ are the unique elements of $L_h$ in $\delta_w^i(A_0)$ and $\delta_w^j(A_0)$, respectively, so $\delta_w^i(A_0) \not= \delta_w^j(A_0)$.
\end{proof}

\subsubsection*{Properties of $\delta_f$}

Turning to the definition of $\delta_f$, it is obvious that applying it to $F$ yields the singleton $\{1\}$.

\begin{observation}\label{obs:delta-f-image}
    $\delta_f(F, F) = \{1\}$.
\end{observation}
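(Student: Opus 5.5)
The claim $\delta_f(F,F)=\{1\}$ follows directly from the definitions, so the plan is to unfold the set-extension of $\delta_f$ and apply the first case of its definition to every pair.

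By the set-extension convention of Section~\ref{sec:sync-fta},
\[
  \delta_f(F,F)=\{\delta_f(q_1,q_2)\mid q_1\in F,\ q_2\in F\}.
\]
For any $q_1,q_2\in F$ we have $(q_1,q_2)\in F^2$, so the first case of the definition of $\delta_f$ applies and gives $\delta_f(q_1,q_2)=1$. The two remaining cases require $(q_1,q_2)\notin F^2$ or are guarded by \enquote{else}, so they never compete with the first case on such pairs. Hence every element of $\delta_f(F,F)$ equals $1$, that is, $\delta_f(F,F)\subseteq\{1\}$.

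For the reverse inclusion, the only point that needs attention is that the image is nonempty, which holds exactly when $F\neq\emptyset$. Here $F$ contains $c_2-1$, since $k\ge 1$ and $c_2-1=c_1+\ell_1-1=\ell_1\ge 1$ lies in $[n]$. Alternatively, Observation~\ref{obs:delta-w-F} together with Lemma~\ref{lem:delta-w-chain} shows that $F=\delta_w^{\mathsf{LCM}-1}(A_0)$ meets every block $L_j$, so it is nonempty. Picking any $q\in F$ gives $1=\delta_f(q,q)\in\delta_f(F,F)$. The two inclusions together give $\delta_f(F,F)=\{1\}$.
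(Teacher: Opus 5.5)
Your proof is correct and takes the same approach as the paper, which treats the claim as immediate from the first case of the definition of $\delta_f$. Your explicit check that $F \neq \emptyset$ (via $c_2 - 1 = \ell_1 \in F$), which the paper leaves implicit, is a welcome addition.
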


Moreover, all other sets reachable from $A_0$ via $\delta_w$ are mapped by $\delta_f$ back to $A_0$. It follows that $\delta_f$ can not be used as an abbreviation: only after applying $\delta_w$ $\mathsf{LCM}-1$ times we can use $\delta_f$ in order to reach $\{1\}$.

\begin{lemma}\label{lem:image-delta-f}
    For every $A, B \in \{[n]\} \cup \delta_w^*(A_0)$ the following holds: if $B \not\subseteq F$, then $\delta_f(A, B) = A_0$.
\end{lemma}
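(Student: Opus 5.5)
We prove the two inclusions $\delta_f(A,B)\subseteq A_0$ and $A_0\subseteq\delta_f(A,B)$ separately. The key structural fact is that every admissible $A$ satisfies $\{c_1,\dots,c_k\}\cap$-type coverage of the blocks and $L_{\mathrm{const}}\subseteq A$. For $A=[n]$ this is trivial. For $A=\delta_w^i(A_0)$ it is exactly Lemma~\ref{lem:delta-w-chain}: $A$ contains exactly one element of each block $L_j$, and $L_{\mathrm{const}}\subseteq A$. The same facts hold for $B$. Note also that $A\subseteq[n]=\bigcup_j L_j\cup L_{\mathrm{const}}$.

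\emph{Inclusion $\subseteq$.} Let $q_1\in A$ and $q_2\in B$, and consider the three cases of $\delta_f$. If $q_1,q_2\in F$, the value is $1=c_1\in A_0$. If $q_1\in L_i$ and $(q_1,q_2)\notin F^2$, the value is $c_i\in A_0$. Otherwise $q_1\notin\bigcup_i L_i$, so $q_1\in L_{\mathrm{const}}$, and the value is $q_1\in L_{\mathrm{const}}\subseteq A_0$. So every output lies in $A_0$; this direction does not even use $B\not\subseteq F$.

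\emph{Inclusion $\supseteq$.} This is where the hypothesis $B\not\subseteq F$ is used. Fix some $b\in B\setminus F$. For each $i\in[k]$, pick $q_1\in A\cap L_i$, which exists by the coverage fact above. Since $b\notin F$, we have $(q_1,b)\notin F^2$, so $\delta_f(q_1,b)=c_i$. For each $q\in L_{\mathrm{const}}$, take $q_1=q\in A$. Then $q_1\notin\bigcup_i L_i$, and since $L_{\mathrm{const}}\subseteq F$, we get $(q,b)\notin F^2$ because $b\notin F$. Hence the "else" case applies and $\delta_f(q,b)=q$. Together these give $A_0\subseteq\delta_f(A,B)$.

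The main obstacle, which is mild, is being careful that the case distinction in $\delta_f$ is applied correctly. In particular, elements of $L_{\mathrm{const}}$ lie in $F$, so only the witness $b\notin F$ prevents them from being sent to $1$. One must also make sure to choose witnesses from $A$ in every block, which needs the invariant from Lemma~\ref{lem:delta-w-chain} rather than just $A\subseteq[n]$.
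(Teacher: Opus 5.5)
Your proof is correct and uses the same ingredients as the paper's: Lemma~\ref{lem:delta-w-chain} supplies a witness of $A$ in every block together with $L_{\mathrm{const}}\subseteq A$, a single $b\in B\setminus F$ then produces all of $A_0$, and every output of $\delta_f$ lies in $A_0$ because $c_1=1$. The only difference is organizational. You split the argument into the two inclusions, and correctly note that $\subseteq$ holds for arbitrary $A,B$. The paper instead splits on whether the second argument $x\in B$ lies in $F$.
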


\begin{proof}
    Since $\delta_f(A, B) = \bigcup_{x \in B} \delta_f(A, \{x\})$, we let $x \in B$ and distinguish two cases.
    \begin{description}
    \item[$x \in B \setminus F$:]
        If $A \not= [n]$, then, by Lemma~\ref{lem:delta-w-chain}, for every $j \in [k]$ there is exactly one $h \in A$ with $h \in L_j$.
        By definition, $\delta_f(h, x) = c_j$.
        Also, by Lemma~\ref{lem:delta-w-chain}, it holds that $L_{\mathrm{const}} \subseteq A$, and by definition, $\delta_f(L_{\mathrm{const}}, \{x\}) = L_{\mathrm{const}}$.
        Since $A \subseteq L_1 \cup \cdots \cup L_k \cup L_{\mathrm{const}}$, we obtain $\delta_f(A, \{x\}) = A_0$.
        If $A = [n]$, then $\delta_f([n], \{x\}) = A_0$ is easy to see.
    \item[$x \in B \cap F$:]
        By definition, for every $q \in [n]$, we have $\delta_f(q, x) = 1$ if $q \in F$ and $\delta_f(q, x) = c_i$ if $q \in L_i \setminus F$ for some $i \in [k]$.
        Since $c_1 = 1$, we obtain $\delta_f(A, \{x\}) \subseteq A_0$.
    \end{description}
    As $B \not\subseteq F$, there exists at least one $x \in B \setminus F$, and thus $\delta_f(A, B) = A_0$.
\end{proof}

\subsubsection*{Reachability of singletons}

Next, we define the sets $\mathrm{Reach}_i$ of subsets of $[n]$ that are reachable within $i$ applications of $\delta_w$ and $\delta_f$ starting from $[n]$. We let $\mathrm{Reach}_0=\{[n]\}$ and 
\[\mathrm{Reach}_i=\mathrm{Reach}_{i-1}\cup\{\delta_f(A,B),\delta_w(A)\mid A,B\in\mathrm{Reach}_{i-1}\}.\]
Clearly, there is a composite expression over $\delta_f$ and $\delta_w$ of height $i$ that is constant if and only if $\mathrm{Reach}_i$ contains a singleton.

\begin{lemma}\label{lem:reach-induction}
    For each $i\in \{1,\ldots,\mathsf{LCM}\}$ we have $\mathrm{Reach}_i=\{[n],A_0\}\cup\delta^{\leq i-1}_w(A_0)$.
\end{lemma}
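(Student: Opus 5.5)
We argue by induction on $i$, computing $\mathrm{Reach}_i$ directly from $\mathrm{Reach}_{i-1}$ via the recursive definition.

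\textbf{Base case $i=1$.} Here $\mathrm{Reach}_1=\{[n],\delta_w([n]),\delta_f([n],[n])\}$.
\begin{itemize}
\item By Observation~\ref{obs:image-delta-w}, $\delta_w([n])=[n]$.
\item To apply Lemma~\ref{lem:image-delta-f} with $A=B=[n]$, we need $[n]\not\subseteq F$. Suppose instead $F=[n]$. Since $|F\cap L_j|=1$ for every $j\in[k]$, this forces $\ell_j=1$ for all $j$, which is the excluded corner case. So Lemma~\ref{lem:image-delta-f} gives $\delta_f([n],[n])=A_0$.
\end{itemize}
Hence $\mathrm{Reach}_1=\{[n],A_0\}=\{[n],A_0\}\cup\delta_w^{\le 0}(A_0)$.

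\textbf{Inductive step.} Let $2\le i\le\mathsf{LCM}$ and assume $\mathrm{Reach}_{i-1}=\{[n],A_0\}\cup\delta_w^{\le i-2}(A_0)$. Every element of $\mathrm{Reach}_{i-1}$ lies in $\{[n]\}\cup\delta_w^*(A_0)$, since $A_0=\delta_w^0(A_0)$. We check the two kinds of new sets.

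\emph{The $\delta_w$-images.} We have $\delta_w([n])=[n]$, and $\delta_w(\delta_w^{j}(A_0))=\delta_w^{j+1}(A_0)$ for $j\le i-2$. So these images add exactly $\delta_w^{i-1}(A_0)$. This gives the inclusion $\supseteq$.

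\emph{The $\delta_f$-images.} For $\subseteq$, take any $A,B\in\mathrm{Reach}_{i-1}$. We want $\delta_f(A,B)=A_0$, which follows from Lemma~\ref{lem:image-delta-f} once $B\not\subseteq F$. There are two cases for $B$.
\begin{itemize}
\item If $B=[n]$, then $B\not\subseteq F$ by the base-case argument.
\item If $B=\delta_w^j(A_0)$ with $0\le j\le i-2\le\mathsf{LCM}-2$, then $B$ contains $L_{\mathrm{const}}$ and exactly one element from each block (Lemma~\ref{lem:delta-w-chain}). The same holds for $F$. So $B\subseteq F$ would force $B=F=\delta_w^{\mathsf{LCM}-1}(A_0)$ by Observation~\ref{obs:delta-w-F}. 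Lemma~\ref{lem:delta-i-neq-delta-j} rules this out, because $j\neq\mathsf{LCM}-1$ and both indices lie in $\{0,\dots,\mathsf{LCM}-1\}$.
\end{itemize}
Thus every $\delta_f$-image equals $A_0$, which is already in the set, and this gives $\subseteq$.

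\textbf{Main obstacle.} The delicate point is showing that $B\not\subseteq F$ for every $B$ in $\mathrm{Reach}_{i-1}$. It relies on the upper limit $i\le\mathsf{LCM}$ and on the cardinality structure from Lemma~\ref{lem:delta-w-chain}, which upgrades $B\subseteq F$ to $B=F$. Only with that upgrade does Lemma~\ref{lem:delta-i-neq-delta-j} apply. Everything else is bookkeeping.
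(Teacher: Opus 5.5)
Your proof is correct and follows essentially the same route as the paper's. It uses the same induction and the same base case based on $F \neq [n]$, which holds because some $\ell_j \ge 2$. In the step it uses the same key fact: no $B$ from the previous level satisfies $B \subseteq F$. Lemma~\ref{lem:delta-w-chain} excludes proper subsets of $F$, and Observation~\ref{obs:delta-w-F} together with Lemma~\ref{lem:delta-i-neq-delta-j} excludes $B = F$, so every $\delta_f$-image collapses to $A_0$ by Lemma~\ref{lem:image-delta-f}. Your explicit cardinality argument that upgrades $B \subseteq F$ to $B = F$ simply spells out what the paper states in a brief closing remark.
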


\begin{proof}
The proof is by induction on $i$. For the induction base, let $i=1$. Then we obtain
\begin{align*}
    \mathrm{Reach}_1=&\mathrm{Reach}_{0}\cup\{\delta_f(A,B),\delta_w(A)\mid A,B\in\mathrm{Reach}_{0}\}\\
    =& \{[n]\}\cup\{\delta_f([n],[n]),\delta_w([n])\}\\
    =& \{[n],A_0\} \tag{Observation \ref{obs:image-delta-w} and Lemma \ref{lem:image-delta-f}}\\
    =& \{[n], A_0\} \cup \delta_w^{\le0}(A_0)
\end{align*}
where Lemma~\ref{lem:image-delta-f} yields $\delta_f([n], [n]) = A_0$ since $[n] \supset F$ (due to our assumption that there exists $j \in [k]$ with $\ell_j > 1$).
For the induction step, let $i\in \{1,\ldots,\mathsf{LCM}-1\}$. Then
\begin{align*}
    \mathrm{Reach}_{i+1}=&\mathrm{Reach}_{i}\cup\{\delta_f(A,B),\delta_w(A)\mid A,B\in\mathrm{Reach}_{i}\}\\
    =& \{[n],A_0\}\cup\delta^{\leq i-1}_w(A_0)\cup\{\delta_f(A,B),\delta_w(A)\mid A,B\in\{[n],A_0\}\cup\delta^{\leq i-1}_w(A_0)\}\tag{IH}\\
    =& \{[n],A_0\}\cup\delta^{\leq i-1}_w(A_0)\cup\{\delta_f(A,B),\delta_w(A)\mid A,B\in\{[n],A_0\}\cup\delta^{\leq i-1}_w(A_0) \setminus \{F\} \}\tag{*}\\
    =& \{[n],A_0\}\cup\delta^{\leq i-1}_w(A_0)\cup\{\delta_w(A)\mid A\in\{[n],A_0\}\cup\delta^{\leq i-1}_w(A_0)\}\tag{**}\\
    =& \{[n],A_0\}\cup\delta^{\leq i}_w(A_0)\tag{Observation \ref{obs:image-delta-w}}
\end{align*}
where $(*)$ holds by Observation \ref{obs:delta-w-F} and Lemma \ref{lem:delta-i-neq-delta-j} and $(**)$ holds by Lemma \ref{lem:image-delta-f}; we note that by Lemma~\ref{lem:delta-w-chain}, no proper subset of $F$ can be an element of $\delta^{\leq i-1}_w(A_0)$.
\end{proof}

\begin{lemma}\label{lem:smallest-h}
    The smallest $h$ such that $\mathrm{Reach}_h$ contains a singleton is $h=\mathsf{LCM}+1$.
\end{lemma}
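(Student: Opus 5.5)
The argument has two parts: no $\mathrm{Reach}_h$ with $h \le \mathsf{LCM}$ contains a singleton, while $\mathrm{Reach}_{\mathsf{LCM}+1}$ does.

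\emph{Lower bound.} Because the sets $\mathrm{Reach}_i$ are increasing in $i$, it suffices to show that $\mathrm{Reach}_{\mathsf{LCM}}$ contains no singleton. The case $h=0$ is immediate, since $\mathrm{Reach}_0=\{[n]\}$ and $n\ge 2$ under the corner-case assumptions. By Lemma~\ref{lem:reach-induction} with $i=\mathsf{LCM}$, we have $\mathrm{Reach}_{\mathsf{LCM}}=\{[n],A_0\}\cup\delta_w^{\le \mathsf{LCM}-1}(A_0)$. Lemma~\ref{lem:delta-w-chain} shows that every $\delta_w^j(A_0)$ meets each of the $k$ disjoint blocks $L_1,\dots,L_k$ in exactly one element and contains $L_{\mathrm{const}}$. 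Hence
\[
|\delta_w^j(A_0)| = k+|L_{\mathrm{const}}| = |A_0|,
\]
and $|A_0|\ge 2$ by the standing assumption excluding $|A_0|=1$. Also $[n]\supseteq A_0$, so $|[n]|\ge 2$. Therefore no member of $\mathrm{Reach}_{\mathsf{LCM}}$ is a singleton.

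\emph{Upper bound.} By Observation~\ref{obs:delta-w-F}, $F=\delta_w^{\mathsf{LCM}-1}(A_0)$. Since $\mathsf{LCM}-1\le \mathsf{LCM}-1$, Lemma~\ref{lem:reach-induction} gives $F\in \mathrm{Reach}_{\mathsf{LCM}}$. By the definition of $\mathrm{Reach}_{\mathsf{LCM}+1}$, the set $\delta_f(F,F)$ therefore belongs to $\mathrm{Reach}_{\mathsf{LCM}+1}$. Observation~\ref{obs:delta-f-image} gives $\delta_f(F,F)=\{1\}$, a singleton.

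Together these show that the smallest $h$ is exactly $\mathsf{LCM}+1$. There is no real obstacle. The only points needing care are that Lemma~\ref{lem:reach-induction} applies at the endpoint $i=\mathsf{LCM}$ (its range is $\{1,\dots,\mathsf{LCM}\}$) and that the cardinality count uses the disjointness of the blocks together with the corner-case assumption $|A_0|\ge 2$.
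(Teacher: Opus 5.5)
Your proposal is correct and follows the paper's proof essentially step for step. Lemma~\ref{lem:reach-induction} together with Lemma~\ref{lem:delta-w-chain} and the corner-case assumptions rules out singletons up to $\mathsf{LCM}$, and $F=\delta_w^{\mathsf{LCM}-1}(A_0)\in\mathrm{Reach}_{\mathsf{LCM}}$ with $\delta_f(F,F)=\{1\}$ gives the singleton at $\mathsf{LCM}+1$; you additionally make explicit the monotonicity of $\mathrm{Reach}_i$, the case $h=0$, and the count $|\delta_w^j(A_0)|=k+|L_{\mathrm{const}}|$, which the paper leaves implicit.
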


\begin{proof}
    By Lemma~\ref{lem:reach-induction}, we have $\mathrm{Reach}_i=\{[n],A_0\}\cup\delta^{\leq i-1}_w(A_0)$ for every $1 \le i \le \mathsf{LCM}$ (we note that Lemma~\ref{lem:reach-induction} can be applied by our assumption that there exists $j \in [k]$ with $\ell_j > 1$).
    Moreover, Lemma~\ref{lem:delta-w-chain} together with the assumption that either $k \ge 2$ or $L_{\mathrm{const}} \not= \emptyset$ implies that no set in $\mathrm{Reach}_i$ is a singleton.
    Also, since $\delta_w^{\mathsf{LCM}-1}(A_0) = F$ by Observation \ref{obs:delta-w-F}, we have $F \in \mathrm{Reach}_{\mathsf{LCM}}$.
    Then, by Observation~\ref{obs:delta-f-image}, we have $\{1\} \in \mathrm{Reach}_{\mathsf{LCM+1}}$.
\end{proof}

\begin{remark}\label{rem:high-rank-gain}
    We note that the role of $\delta_f$ cannot be fulfilled by a unary function.
    This is because, in essence, when considered as a function on sets, $\delta_f$ combines two unary functions and switches between them based on its first argument.
    These functions are $a\colon [n] \to [n]$ and $b \colon [n] \to [n]$ where
    \[
        a(q) = \begin{cases}
            c_i & \text{if } q\in L_i \text{ for some } i\in[k]\\
            q & \text{else}
        \end{cases}
        \qquad\text{and}\qquad
        b(q) = \begin{cases}
            1 & \text{if } q \in F\\
            c_i & \text{if } q\in L_i\setminus F \text{ for some } i\in[k]\\
            q & \text{else.}
        \end{cases}
    \]
    Then, for every $A \subseteq [n]$ and $B \subseteq [n] \setminus F$, we have $\delta_f(A, B) = a(A)$ and $\delta_f(A, F) = b(A)$.
    This technique allows us to defer the application of the function $b$ until the set $F$ is reached (which takes $\mathsf{LCM}$ steps in our case), using the non-productive function $a$ in the meantime.

    Since a set of unary functions alone cannot behave in such a way, the technique of this section cannot be applied to DFA.
    Indeed, whenever two states of a DFA can be synchronized, this can be done in at most $n^2$ applications of any transition function (cf.\@ e.g.~\cite{Eppstein1990}); our $\mathsf{LCM}$ may exceed $n^2$ by far.
\end{remark}

\section{Lower bounds on smallest synchronizing terms}\label{sec:bounds}

In this section we show that, given $\ell_1, \dots, \ell_k$, we can use the functions $\delta_w$ and $\delta_f$ from the previous section to construct a DTA whose smallest synchronizing term has height $\mathrm{lcm} \{\ell_1, \dots, \ell_k\} + 1$ (Lemma~\ref{lem:lcm-automaton}).
We apply this general result to two particular strategies for choosing $\ell_1, \dots, \ell_k$ in order to obtain improved lower bounds for the reset thresholds of DTA over two-symbol alphabets: one polynomial with a freely choosable degree (Theorem~\ref{thm:polynomial}) and one subexponential (Theorem~\ref{thm:landau}).

Our DTA construction uses $[n]$ as the set of states and $\{ w, f \}$ as the alphabet.
This allows us to define its transition function in terms of $\delta_w$ and $\delta_f$ from the previous section.
Thus, there is a synchronizing term of height $h$ over $\{ w, f \}$ if and only if $\mathrm{Reach}_h$ contains a singleton.

\begin{lemma}\label{lem:lcm-automaton}
    Let $n, k, \ell_1, \dots, \ell_k \in \mathbb N_+$ such that $\ell_1 + \cdots + \ell_k \le n$, there exists $i \in [k]$ with $\ell_i \ge 2$, and $k \ge 2$.%
    \footnote{We note that this lemma would still hold if $k = 1$ but $\ell_1 < n$. However, since this case is not relevant for our main theorems, we have neglected it in order to make the statement of this lemma more concise.}
    Then there exists a DTA with $n$ states whose smallest synchronizing term has height $\mathrm{lcm} \{\ell_1, \dots, \ell_k\} + 1$.
\end{lemma}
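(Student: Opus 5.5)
The plan is to take the DTA $\A = ([n], \{w^{(1)}, f^{(2)}\}, (\delta_\sigma)_{\sigma}, [n])$, where $\delta_w$ and $\delta_f$ are the functions of Section~\ref{sec:construction}, built from the given $n, k, \ell_1, \dots, \ell_k$. First I will check that the side conditions of Section~\ref{sec:construction} hold. The existence of $i \in [k]$ with $\ell_i \ge 2$ rules out the case $\ell_1 = \dots = \ell_k = 1$. The assumption $k \ge 2$ gives $|A_0| \ge k \ge 2$, because $c_1, \dots, c_k$ are pairwise distinct (they lie in different blocks); this rules out $|A_0| = 1$. Hence Lemmas~\ref{lem:reach-induction} and~\ref{lem:smallest-h} are available.

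Next I will link synchronizing terms to $\mathrm{Reach}_h$. The claim is that for every $h \in \mathbb N$ the sets $\mathrm{Run}_\A(\xi)$, where $\xi$ ranges over full contexts of height at most $h$, are exactly the elements of $\mathrm{Reach}_h$.

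\emph{Every such run lies in $\mathrm{Reach}_h$.} I argue by structural induction on $\xi$. A variable has run $[n] \in \mathrm{Reach}_0$. For $\xi = w(\xi_1)$, the run is $\delta_w$ applied to an element of $\mathrm{Reach}_{h-1}$. For $\xi = f(\xi_1, \xi_2)$, the run is $\delta_f$ applied to elements of $\mathrm{Reach}_{h-1}$, since both subtrees have height at most $h-1$ and the sets $\mathrm{Reach}_i$ are monotone in $i$. Full contexts contain no leaf symbols, so no other case arises. A variable may occur only once, but that causes no difficulty: the paper's definition of $\mathrm{Run}_\A$ gives every variable occurrence the set $[n]$ independently, so $\mathrm{Run}_\A(f(\xi_1, \xi_2)) = \delta_f(\mathrm{Run}_\A(\xi_1), \mathrm{Run}_\A(\xi_2))$ holds for arbitrary subterms.

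\emph{Every element of $\mathrm{Reach}_h$ is such a run.} I build the witness inductively. The set $[n]$ is realized by $x_1$. If $A, B$ are realized by $\xi_A, \xi_B$, then $\delta_f(A, B)$ is realized by $f(\xi_A, \xi_B)$ and $\delta_w(A)$ by $w(\xi_A)$, after renaming variables so that they read $x_1, x_2, \dots$ from left to right. The renaming does not change the run, so the result is again a full context of the right height.

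With this correspondence, $\A$ has a synchronizing term of height at most $h$ if and only if $\mathrm{Reach}_h$ contains a singleton. Lemma~\ref{lem:smallest-h} says the least such $h$ is $\mathsf{LCM} + 1 = \mathrm{lcm}\{\ell_1, \dots, \ell_k\} + 1$. Therefore $\mathrm{rt}(\A)$ equals this value, and $\A$ has exactly $n$ states.

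The only part needing care is the correspondence between full contexts and $\mathrm{Reach}_h$, specifically the variable-linearity requirement in the reverse direction. The renaming argument above handles it, so I expect no real obstacle; the remaining steps just cite earlier results.
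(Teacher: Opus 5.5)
Your proposal is correct and matches the paper's proof: the same DTA over $\{w^{(1)}, f^{(2)}\}$, the identification of $\mathrm{rt}(\A)$ with the least $h$ such that $\mathrm{Reach}_h$ contains a singleton, and then Lemma~\ref{lem:smallest-h}. You also spell out two things the paper leaves implicit. First, you check the side conditions: $\ell_i \ge 2$ excludes the all-ones case, and $k \ge 2$ gives $|A_0| \ge 2$. Second, you carry out the structural induction the paper only calls easy; strictly it should range over all leaf-free trees with linear variables, since subterms of a full context need not be contexts, but your remark that $\mathrm{Run}_\A$ ignores variable names covers this.
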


\begin{proof}
    We let $\A = ([n], \{w^{(1)}, f^{(2)}\}, (\delta_\sigma)_{\sigma\in\{w,f\}}, [n])$ with $\delta_w$ and $\delta_f$ defined as in the previous section.
    We have 
    \begin{align*}
        \mathrm{rt}(\A) &= \min \{ \height(t) \mid t \in \mathrm{T}_{\Sigma}(X) \text{ is a full context}, |\mathrm{Run}_{\A}(t)| = 1 \} \\
        &= \min \{ h \in \mathbb N \mid \text{$\mathrm{Reach}_h$ contains a singleton}\} \tag{*} \\
        &= \mathrm{lcm} \{\ell_1, \dots, \ell_k\} + 1 \tag{Lemma~\ref{lem:smallest-h}}
    \end{align*}
    where ($*$) can be easily shown by structural induction.
\end{proof}

\begin{remark}
    We note that the tree automaton constructed above does not incorporate nullary symbols and, thus, the tree language it accepts is empty.
    While it is not relevant to synchronization, our construction can easily be transferred to automata that read leaf symbols.
    This will not affect our results on reset thresholds since synchronizing terms do not contain nullary symbols by definition.
\end{remark}

Our strategy for constructing a family of DTA with a polynomial reset threshold is as follows.
Fix $d \ge 2$.
For any sufficiently large $n$, pick the $d$ largest prime numbers up to $\lfloor \frac{n}{d} \rfloor$.
As their least common multiple is in $\Omega\big(\left(\frac{n}{d}\right)^d\big)$, Lemma~\ref{lem:lcm-automaton} yields a DTA with a polynomial reset threshold of degree~$d$.

\begin{theorem}\label{thm:polynomial}
    Let $d \ge 2$.
    For every $n \ge d \cdot R_d$ there exists a DTA with $n$ states whose smallest synchronizing term has height at least $(\frac{n}{4d})^d$.
\end{theorem}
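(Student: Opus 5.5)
We apply Lemma~\ref{lem:lcm-automaton} with $k = d$ and with $\ell_1, \dots, \ell_d$ chosen as $d$ distinct primes from the interval $(\frac{x}{2}, x]$, where $x = \frac{n}{d}$. The hypothesis $n \ge d \cdot R_d$ gives $x \ge R_d$. By the definition of the $d$-th Ramanujan prime, $\pi(x) - \pi(\frac{x}{2}) \ge d$, so such primes exist. Concretely, we take $\ell_1 < \dots < \ell_d$ to be the $d$ largest primes not exceeding $x$. Since $\pi$ only changes at integers, using $\lfloor \frac{n}{d} \rfloor$ instead of $x$ makes no difference.

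Next we check the hypotheses of Lemma~\ref{lem:lcm-automaton}:
\begin{itemize}
\item $k = d \ge 2$;
\item every $\ell_i$ is prime, hence $\ell_i \ge 2$;
\item $\ell_1 + \dots + \ell_d \le d \cdot \frac{n}{d} = n$.
\end{itemize}
The lemma then yields a DTA with $n$ states whose smallest synchronizing term has height $\mathrm{lcm}\{\ell_1, \dots, \ell_d\} + 1$.

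It remains to bound the lcm from below. The $\ell_i$ are distinct primes, so they are pairwise coprime and their lcm is their product. For a proof from the paper's elementary notions, we can use the divisibility argument from Lemma~\ref{lem:lower-bound-landau}, iterated: each prime $\ell_i$ divides the lcm, and a prime dividing a product of other primes must equal one of them. Each $\ell_i > \frac{x}{2} = \frac{n}{2d}$, so the product exceeds $(\frac{n}{2d})^d \ge (\frac{n}{4d})^d$. This gives the claim, with room to spare.

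The main obstacle is the interplay between $x$ and the floor, i.e.\ making sure the $d$ primes really lie in $(\frac{n}{2d}, \frac{n}{d}]$. The slack factor $4$ instead of $2$ absorbs any rounding issue. Even if one only uses the interval $(\frac{\lfloor n/d\rfloor}{2}, \lfloor n/d \rfloor]$, we have $\lfloor n/d \rfloor \ge \frac{n}{2d}$ (because $n/d \ge R_d \ge 2$), so each prime is still at least $\frac{n}{4d}$.
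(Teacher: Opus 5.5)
Your proof is correct and follows essentially the same route as the paper. You use the same $d$ largest primes up to $n/d$ (which coincide with those up to $\lfloor n/d\rfloor$), the Ramanujan-prime bound to place them above half that value, Lemma~\ref{lem:lcm-automaton} with $k=d$, and $\mathrm{lcm}=$ product. The only difference is that you apply the Ramanujan property directly at the real point $n/d$ (legitimate, since both $\pi(y)$ and $\pi(y/2)$ depend only on $\lfloor y\rfloor$), which gives $\ell_i > \frac{n}{2d}$; the paper works with $m=\lfloor n/d\rfloor$ and loses a further factor $2$ via $m \ge \frac{n}{2d}$, which is where its constant $4$ comes from.
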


\begin{proof}
    For every $n \ge d \cdot R_d$, we let $m = \lfloor n/d \rfloor$ and
    \[
        (\ell_1, \dots, \ell_d) = (p_{\pi(m)-d+1}, p_{\pi(m)-d+2}, \dots, p_{\pi(m)})\text.
    \]
    We note that, since $n \ge d \cdot R_d$, we have $m \ge R_d$ and thus $\pi(m) - \pi(\tfrac m2) \ge d$,
    i.e., there are at least $d$ prime numbers greater than $\tfrac m2$ and at most $m$.
    Hence $(\ell_1, \dots, \ell_d)$ exists and $\ell_i > \tfrac m2$ for every $i \in [d]$.
    Moreover, $\sum_{i=1}^d \ell_i \le d \cdot m = d \lfloor n/d \rfloor \le n$.

    By Lemma~\ref{lem:lcm-automaton} (which can be applied with $k = d \ge 2$ and since each $\ell_i$ is prime and hence $\ge 2$), there exists a DTA whose smallest synchronizing term has height $\mathrm{lcm} \{\ell_1, \dots, \ell_d\} + 1$.
    Now, since $(\ell_1, \dots, \ell_d)$ are pairwise distinct prime numbers, we have $\mathrm{lcm} \{\ell_1, \dots, \ell_d\} = \prod_{i=1}^d \ell_i$.
    Moreover, $n/d \ge R_d \ge 1$ and hence $m = \lfloor n/d \rfloor \ge \frac{n}{2d}$, since $\lfloor x \rfloor \ge x/2$ for every real $x \ge 1$.
    This implies
    \[
        \ell_i > \frac{m}2 \ge \frac{n}{4d}.
    \]
    As a consequence, $\prod_{i=1}^d \ell_i \ge \left( \frac{n}{4d} \right)^d$.

    We obtain that the height of the smallest synchronizing term of the DTA is
    \[
        \mathrm{lcm} \{\ell_1, \dots, \ell_d\} + 1 = \prod_{i=1}^d \ell_i + 1 > \left( \frac{n}{4d} \right)^d.
        \qedhere
    \]
\end{proof}

\begin{remark}
    We note that our lower bound substantially underestimates the height of the smallest synchronizing term.
    Since $R_d < p_{3d}$~\cite{Laishram2010} and $p_{3d} < 3d(\log 3d + \log\log 3d)$~\cite{RosserSchoenfeld1962},
    it covers all $n \in \Omega(d^2 \log d)$.
    While a more elaborate selection of $d$ pairwise coprime numbers (that are not necessarily primes) could lead to a better bound that potentially covers more initial $n$,
    we stick to our strategy since it is conceptually simple and fulfills the goal of showing polynomial lower bounds of arbitrary degree.
    Our best bound (which we show next) will be superpolynomial and thus exceed any polynomial construction anyway.
\end{remark}

Our strategy for constructing a family of DTA with a subexponential reset threshold is as follows.
For any $n$, pick $k$ and $\ell_1, \dots, \ell_k$ such that $\mathrm{lcm} \{\ell_1, \dots, \ell_k\}$ is maximized and $\ell_1 + \cdots + \ell_k \le n$.
As their least common multiple is $e^{(1+o(1)) \cdot \sqrt{n \cdot \ln n}}$, Lemma~\ref{lem:lcm-automaton} yields a DTA with a subexponential reset threshold.

\begin{theorem}\label{thm:landau}
    For every $n \ge 5$ there exists a DTA $\mathcal L_n$ which has $n$ states and a smallest synchronizing term of height $g(n) + 1$, where $g\colon \mathbb N \to \mathbb N$ is Landau's function.
    Thus, the height of the smallest synchronizing term of the family $(\mathcal L_n \mid n \ge 5)$ has growth $e^{(1+o(1)) \cdot \sqrt{n \cdot \ln n}}$.
\end{theorem}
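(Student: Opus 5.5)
The proof is a direct combination of Lemma~\ref{lem:lower-bound-landau} and Lemma~\ref{lem:lcm-automaton}.

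Fix $n \ge 5$. By Lemma~\ref{lem:lower-bound-landau} we obtain $k \ge 2$ and $\ell_1, \dots, \ell_k \in \mathbb N_+$ with three properties: $\ell_1 + \cdots + \ell_k \le n$, $\mathrm{lcm}\{\ell_1, \dots, \ell_k\} = g(n)$, and $\ell_i \ge 2$ for some $i \in [k]$. These are exactly the hypotheses of Lemma~\ref{lem:lcm-automaton}. That lemma therefore yields a DTA over $\{w^{(1)}, f^{(2)}\}$ with state set $[n]$, hence with $n$ states. We call it $\mathcal L_n$. Its smallest synchronizing term has height $\mathrm{lcm}\{\ell_1, \dots, \ell_k\} + 1 = g(n) + 1$, which is the first claim.

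For the growth statement, we use Landau's estimate $g(n) = e^{(1+o(1))\sqrt{n \ln n}}$ from~\cite{Lan1903}, recalled in Section~\ref{sec:preliminaries}. The additive $+1$ is absorbed into the $o(1)$ term in the exponent. Indeed, $\ln(g(n)+1) = \ln g(n) + \ln(1 + 1/g(n))$. The correction term is at most $1/g(n) \le 1$, while $\sqrt{n \ln n} \to \infty$. So $\ln(g(n)+1)/\sqrt{n\ln n} \to 1$, that is, $g(n) + 1 = e^{(1+o(1))\sqrt{n\ln n}}$.

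There is no real obstacle here. The one point requiring care is that the partition realizing $g(n)$ must avoid the excluded corner cases: it needs $k \ge 2$ and some $\ell_i \ge 2$, since otherwise Lemma~\ref{lem:reach-induction} and Lemma~\ref{lem:smallest-h} would fail. This is exactly what Lemma~\ref{lem:lower-bound-landau} provides, and it is why the theorem requires $n \ge 5$.
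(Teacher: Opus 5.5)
Your proposal is correct and follows the paper's proof exactly: Lemma~\ref{lem:lower-bound-landau} supplies a partition with $k \ge 2$ and some $\ell_i \ge 2$ realizing $g(n)$, Lemma~\ref{lem:lcm-automaton} turns it into an $n$-state DTA with reset threshold $g(n)+1$, and Landau's estimate gives the growth. Your explicit check that the additive $+1$ is absorbed into the $o(1)$ in the exponent is a small detail the paper leaves implicit, and it is handled correctly.
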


\begin{proof}
    Let $n \ge 5$.
    By Lemma~\ref{lem:lower-bound-landau}, there exist $k \ge 2$ and $\ell_1, \dots, \ell_k \in \mathbb N_+$ such that $\ell_1 + \dots + \ell_k \le n$, $\mathrm{lcm} \{ \ell_1, \dots, \ell_k \} = g(n)$, and there is some $i \in [k]$ with $\ell_i \ge 2$.
    By Lemma~\ref{lem:lcm-automaton} there exists a DTA $\mathcal L_n$ with $n$ states whose reset threshold is $\mathrm{lcm} \{\ell_1, \dots, \ell_k\} + 1 = g(n) + 1$.
    By~\cite{Lan1903}, $g(n) = e^{(1+o(1)) \cdot \sqrt{n \cdot \ln n}}$, which yields the claimed growth.
\end{proof}

\begin{remark}
    We note that the search space for optimal $\ell_1, \dots, \ell_k$ is restricted to powers of pairwise distinct primes.
    Then, the goal becomes partitioning $n$ into prime powers (and potentially some slack) with a maximal product.
    This can be solved by Dynamic Programming (e.g., a modification of the $O\big(\frac{n^{3/2}}{\sqrt{\log n}}\big)$ algorithm of~\cite{Nicolas1969} with backtracking information that allows retrieval of the prime powers).
\end{remark}

\section{Outlook}

We have improved the lower bound on the reset threshold of DTA over constant-sized alphabets from quadratic to subexponential.
This significantly narrows the gap towards the upper bound, which still sits at $2^n - n - 1$.
Our technique was flexible enough to also allow us to construct families of DTA that have (weaker) polynomial thresholds, one family for each degree greater than $1$.

However, the construction of this paper cannot be used to improve this bound further:
in essence, it iterates a unary function as long as possible (the binary function serves only as a guard), and by the very definition of Landau's function, the length of the iteration is already maximized.
We believe that the value of our construction lies in its minimality:
by using only two symbols and only one property exclusive to higher-arity functions (cf.\@ Remark~\ref{rem:high-rank-gain}), we are already able to surpass the bounds of DFA by far.
It is ongoing work to explore the possibilities enabled by binary functions further in order to reach an exponential lower bound (as it has already been done for alphabets that grow with the number of states).

\bibliographystyle{plain}
\bibliography{lit}

\end{document}